\newtheorem{theorem}{Theorem} 
\newtheorem{lemma}[theorem]{Lemma}
\begin{document}
\newcommand*\circled[1]{\tikz[baseline=(char.base)]{
            \node[shape=circle,draw,inner sep=2pt] (char) {#1};}}
\setlength{\textfloatsep}{8pt plus 1.0pt minus 2.0pt}
\setlength{\intextsep}{1.5ex}

\newcommand{\basestation}[3]{
\coordinate (a) at (#1,#2);
\draw[line width=(#3)*1.5pt,scale=#3] ($(a)+(0, -1.08)$) -- ($(a)+(0, 1)$);
\draw[line width=(#3)*1.5pt,scale=#3] ($(a)+(0,1)$) .. controls ($(a)+(-0.25,-0.5)$) .. ($(a)+(-0.5,-0.9)$);
\draw[line width=(#3)*1.5pt,scale=#3] ($(a)+(0,1)$) .. controls ($(a)+(0.25,-0.5)$) .. ($(a)+(0.5,-0.9)$);
\draw[line width=(#3)*1.5pt,scale=#3] ($(a)+(0, -0.9)$) -- ($(a)+(-0.35, -0.7)$);
\draw[line width=(#3)*1.5pt,scale=#3] ($(a)+(0, -0.9)$) -- ($(a)+(0.35, -0.7)$);
\draw[line width=(#3)*0.75pt,scale=#3] ($(a)+(-0.35, -0.65)$) -- ($(a)+(0, -0.5)$);
\draw[line width=(#3)*0.75pt,scale=#3] ($(a)+(0.35, -0.65)$) -- ($(a)+(0, -0.5)$);
\draw[line width=(#3)*1pt,scale=#3] ($(a)+(0, -0.6)$) -- ($(a)+(-0.3, -0.45)$);
\draw[line width=(#3)*1pt,scale=#3] ($(a)+(0, -0.6)$) -- ($(a)+(0.3, -0.45)$);
\draw[line width=(#3)*0.5pt,scale=#3] ($(a)+(-0.3, -0.45)$) -- ($(a)+(0, -0.32)$);
\draw[line width=(#3)*0.5pt,scale=#3] ($(a)+(0.3, -0.45)$) -- ($(a)+(0, -0.32)$);
\draw[line width=(#3)*0.75pt,scale=#3] ($(a)+(0, -0.3)$) -- ($(a)+(-0.22, -0.17)$);
\draw[line width=(#3)*0.75pt,scale=#3] ($(a)+(0, -0.3)$) -- ($(a)+(0.22, -0.17)$);
\draw[line width=(#3)*0.5pt,scale=#3] ($(a)+(-0.22, -0.17)$) -- ($(a)+(0, -0.07)$);
\draw[line width=(#3)*0.5pt,scale=#3] ($(a)+(0.22, -0.17)$) -- ($(a)+(0, -0.07)$);;
\draw[line width=(#3)*0.75pt,scale=#3] (a) -- ($(a)+(-0.18, 0.11)$);
\draw[line width=(#3)*0.75pt,scale=#3] (a) -- ($(a)+(0.18, 0.11)$);
\draw[line width=(#3)*0.5pt,scale=#3] ($(a)+(-0.18, 0.11)$) -- ($(a)+(0,0.2)$);
\draw[line width=(#3)*0.5pt,scale=#3] ($(a)+(0.18, 0.11)$) -- ($(a)+(0,0.2)$);   
\draw[line width=(#3)*0.5pt,scale=#3] ($(a)+(0, 0.3)$) -- ($(a)+(-0.1, 0.37)$);
\draw[line width=(#3)*0.5pt,scale=#3] ($(a)+(0, 0.3)$) -- ($(a)+(0.1, 0.37)$);
\draw[line width=(#3)*0.25pt,scale=#3] ($(a)+(-0.1, 0.37)$) -- ($(a)+(0, 0.43)$);
\draw[line width=(#3)*0.25pt,scale=#3] ($(a)+(0.1, 0.37)$) -- ($(a)+(0, 0.43)$);
\draw[line width=(#3)*0.75pt,scale=#3] ($(a)+(0, 1.2)$) -- ($(a)+(0,1)$);
\draw[fill=white,scale=#3] ($(a)+(0, 1.2)$) circle (0.05cm);
\draw[line width=(#3)*1pt,decorate,decoration=expanding waves,decoration={segment length=(#3)*10pt},scale=#3] ($(a)+(0, 1.2)$) -- ($(a)+(0, 2.5)$);
}

\newcommand{\iPhone}[3]{
\coordinate (a) at (#1,#2);
\draw [line width=0.25pt,rounded corners=(#3)*1mm,fill=white,scale=(#3)] (a)--($(a)+(0.67,0)$)--($(a)+(0.67,1.381)$)--($(a)+(0,1.381)$)--cycle;
\draw [color=gray,line width=0.25pt,rounded corners=(#3)*0.8mm,fill=white,scale=(#3)] ($(a)+(0.015,0.015)$)--($(a)+(0.655,0.015)$)--($(a)+(0.655,1.366)$)--($(a)+(0.015,1.366)$)--cycle;
\draw [line width=0.25pt,rounded corners=(#3)*0.04mm,scale=(#3)] ($(a)+(0.2875,1.266)$)--($(a)+(0.3825,1.266)$)--($(a)+(0.3825,1.281)$)--($(a)+(0.2875,1.281)$)--cycle;
\draw[line width=0.25pt,scale=#3] ($(a)+(0.335,0.09)$) circle (0.055cm);
\draw[line width=0.25pt,scale=#3] ($(a)+(0.335,0.09)$) circle (0.044cm);
\draw[line width=0.25pt,scale=#3] ($(a)+(0.2275,1.2735)$) circle (0.015cm);
\draw[line width=0.25pt,scale=#3] ($(a)+(0.335,1.32)$) circle (0.01cm);
\draw [fill={rgb:black,1;white,4},line width=0.25pt,scale=(#3)] ($(a)+(0.042475,0.170195)$)--($(a)+(0.042475,0.170195)+(0.58505,0.0)$)--($(a)+(0.042475,0.170195)+(0.58505,1.04061)$)--($(a)+(0.042475,0.170195)+(0.0,1.04061)$)--cycle;
}

\title{On the Capacity of an Elemental Two-Way Two-Tier Network} 

\author{\begin{minipage}[c]{0.5\linewidth}
\centering
Dennis~Michaelis and Aydin~Sezgin\\
Institute of Digital Communication Systems\\
Ruhr-Universität Bochum\\
Email: \{dennis.michaelis, aydin.sezgin\}@rub.de
\end{minipage}
\begin{minipage}[c]{0.49\linewidth}
\centering
Eduard~A.~Jorswieck\\
Institute of Communications Theory\\
TU Dresden\\
Email: eduard.jorswieck@tu-dresden.de
\end{minipage}
}


\maketitle

\begin{abstract}
A basic setup of a two-tier network, where two mobile users exchange messages with a multi-antenna macrocell basestation, is studied from a rate perspective subject to beamforming and power constraints. The communication is facilitated by two femtocell basestations which act as relays as there is no direct link between the macrocell basestation and the mobile users. We propose a scheme based on physical-layer network coding and compute-and-forward combined with a novel approach that solves the problem of beamformer design and power allocation. We also show that the optimal beamformers are always a convex combination of the channels between the macro- and femtocell basestations. We then establish the cut-set bound of the setup to show that the presented scheme almost achieves the capacity of the setup numerically.\\
\IEEEpeerreviewmaketitle
\end{abstract}

\thanks{This work was supported by the DFG grants SE 1697/5 and JO 801/7-1.}

\section{Introduction}
\IEEEPARstart{N}{ext} generation (5G) wireless communication systems demand high-speed and high-quality data applications. Additionally, with the Internet of Things (IoT) as one of the new use cases for 5G, the number of devices exchanging information is expected to be 28 billion by 2020 \cite{Ericsson}. Introducing smaller sized cells, such as femto- and picocells, is considered as a potential solution for power efficiency while increasing the supported bitrates for devices in such cells \cite{PengCloudHetNet},\cite{EricssonHetNet}. Femto- and picocell basestations (BSs) can act as relays and forward messages from macrocell BSs to their destination, achieving peak data rates of up to $1$~Gbit/s in stationary or pedestrian environments under practical conditions \cite{PengHetNet}. However, although already deployed in 4G and LTE, the aforementioned advantages of relay networks are far from completely utilized \cite{PengCloudHetNet}.\\
An additional feature to enhance the performance even further is to operate the two-tier network in a two-way mode \cite{chaaban2011capacity},\cite{article113},\cite{Bla}, in which the exchange of information from the BS to the mobile users and vice versa is performed simultaneously.
\begin{figure}[!htb]
\begin{tikzpicture}
    \draw[fill=blue!20] (0,-0.5) ellipse (4cm and 1.5cm);
    \draw[fill=white] (-0.25,0.25) ellipse (2cm and 0.5cm);
    \draw[fill=white,rotate=5] (1.75,-1) ellipse (1.25cm and 0.5cm);
    \draw[fill=white,rotate=-10] (-1.5,-1.25) ellipse (1.75cm and 0.5cm);
    \iPhone{-1.25}{0.25}{0.3}
    \iPhone{1.5}{-1}{0.3} 
    \iPhone{0.8}{-1.1}{0.3} 
    \iPhone{1.2}{-1.25}{0.3} 
    \iPhone{-3}{-1}{0.3}
    \iPhone{-2.6}{-1.2}{0.3}
    \iPhone{-1.2}{-1.2}{0.3}
    \iPhone{-0.8}{-1.4}{0.3}
    \basestation{1}{0.5}{0.5}
    \basestation{2.25}{-0.6}{0.5}
    \basestation{-1.75}{-0.6}{0.5}
    \basestation{3}{0.85}{1}
    \draw[line width=1pt] (2.5,1.85) -- (3.5,1.85);
    \draw[line width=0.75pt] (3.5,2.05) -- (3.5,1.85);
    \draw[line width=0.75pt] (2.5,2.05) -- (2.5,1.85);
    \draw[fill=white] (2.5,2.05) circle (0.05cm);
    \draw[fill=white] (3.5,2.05) circle (0.05cm);
    \node at (3.25,2.05) {...};
    \node at (2.75,2.05) {...};
        
   \filldraw[fill=white, draw=black, line width=0.55pt] (-0.76,-0.06)  rectangle (-0.02,1.5);
   \node [draw,trapezium,trapezium left angle=30,trapezium right angle=-30, minimum width=1.2cm,fill=white] at (-0.15,1.6275) {};
   \node [draw,trapezium,trapezium left angle=-60,trapezium right angle=60, minimum height=0.3cm,fill=white, rotate=90] at (0.225,0.85) {\phantom{blabla.l}};    
   \filldraw[fill=white,draw=black,line width=0.55pt] (-0.7,1.1) rectangle(-0.5,1.3);    
   \filldraw[fill=white,draw=black,line width=0.55pt] (-0.5,1.1) rectangle(-0.3,1.3);
   \filldraw[fill=white,draw=black,line width=0.55pt] (-0.3,1.1) rectangle(-0.1,1.3);

   \filldraw[fill=white,draw=black,line width=0.55pt] (-0.7,0.8) rectangle(-0.5,1);    
   \filldraw[fill=white,draw=black,line width=0.55pt] (-0.5,0.8) rectangle(-0.3,1);
   \filldraw[fill=white,draw=black,line width=0.55pt] (-0.3,0.8) rectangle(-0.1,1);

   \filldraw[fill=white,draw=black,line width=0.55pt] (-0.7,0.5) rectangle(-0.5,0.7);    
   \filldraw[fill=white,draw=black,line width=0.55pt] (-0.5,0.5) rectangle(-0.3,0.7);
   \filldraw[fill=white,draw=black,line width=0.55pt] (-0.3,0.5) rectangle(-0.1,0.7);
   	
   \filldraw[fill=white, draw=black, line width=0.55pt] (-0.65,-0.08) rectangle (-0.45,0.3);    
\end{tikzpicture}
\caption{Example of a two-tier network.}
\label{fig:Intro}
\end{figure}
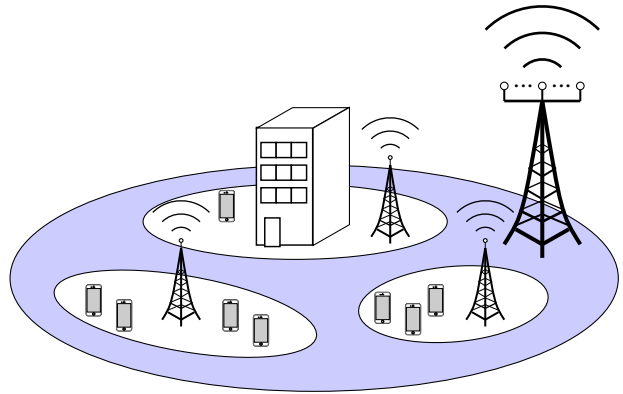

An actual example of a two-way two-tier network, consisting of one macrocell BS, multiple femtocell BSs and mobile users, is depicted in Fig. \ref{fig:Intro}. In order to achieve progress on the understanding of the fundamental limits of this rather involved setup, we focus our study on an elemental subsystem of the overall setup. To this end, we consider a wireless communication system with one macrocell BS and two mobile users, each of which is served by a femtocell BS. The macrocell BS has to deliver independent information towards each of the mobile users, respectively. To do so, the macrocell BS is equipped with an array of multiple antennas. Existing techniques for canceling non-causally known interference can be utilized to eliminate interference without power penalty \cite{Costa}, \cite{DirtyLattices}. The mobile users on the other hand are equipped with only a single antenna, and are also to deliver information to the macrocell BS. Because there is no direct path between the macrocell BS and the mobile users, these information have to be received and forwarded by the two femtocell BS, which use nested lattice codes to effectively process multiple messages at a time \cite{LatticesRelay}, \cite{Anas}. Upon reception, both the mobile users and the macrocell basestation can use the knowledge of their transmitted messages as side information to improve the performance.

\section{System Model}\label{section2}

This paper considers the bidirectional two-way communication in a two-tier network as shown in Fig. \ref{fig:Communication System}, which is an elementary component of the heterogeneous network system depicted in Fig. \ref{fig:Intro}. The system consists of one macrocell basestation ($BS$), two mobile users ($U_1$ and $U_2$) and two femtocell BSs ($R_1$ and $R_2$), which act as relays to aid the communication process. It is assumed that the macrocell BS is equipped with $M\geq 2$ antennas, while each femtocell BS and each mobile user is only equipped with a single antenna. All devices operate in half-duplex mode. It is further assumed that each femtocell BS only serves a single mobile user and all participants have perfect channel state information (CSI). The details of the studied network are given as follows.\\
There are four messages $m_1,\dots,m_4$ in total to be transmitted and all transmissions occur over quasi-static slow-fading channels $\boldsymbol{h}_1,\boldsymbol{h}_2,h_3,h_4$. The channels between the macrocell BS and the femtocell BSs are described by $\boldsymbol{h}_1,\boldsymbol{h}_2\in\mathbb{C}^{M\times1}$, whereas $h_3,h_4\in\mathbb{C}$ denote the channels between the femtocell BSs and the users, respectively. Messages $m_1$ and $m_2$ are originated at the macrocell BS and are destined to be transmitted to $U_1$ and $U_2$, respectively, whereas messages $m_3$ and $m_4$ are received by the macrocell BS, coming from the mobile users (see Fig. \ref{fig:Communication System}).\\
We may divide the whole transmission process into two phases. In the \textit{uplink phase} all messages $m_1,\dots,m_4$ are sent from their origin (macrocell BS, mobile user) to the respecting femtocell BS. After some processing, the femtocell BSs will forward the messages to their destinations (macrocell BS, mobile user). We consequently refer to this phase as the \textit{downlink phase}.
 
\begin{figure}[!htb]
\centering
\begin{tikzpicture}
\basestation{0}{-0.2}{0.8}
\draw[line width=0.5pt] (-0.4,0.6) -- (0.4,0.6);
\draw[line width=0.5pt] (0.4,0.6) -- (0.4,0.7);
\draw[line width=0.5pt] (-0.4,0.6) -- (-0.4,0.7);
\draw[fill=white] (-0.4,0.75) circle (0.04cm);
\draw[fill=white] (0.4,0.75) circle (0.04cm);
\node at (0.2,0.7) {...};
\node at (-0.2,0.7) {...};
\basestation{-2}{-0.5}{0.5}
\basestation{2}{-0.5}{0.5}
\iPhone{-3.5}{-1}{0.5}
\iPhone{3.2}{-1}{0.5}
\draw[line width=0.75pt, <->,>=latex](-0.5, -0.5)-- node[above=0.2mm]{$\boldsymbol{h}_1$}(-1.75, -0.5);
\draw[line width=0.75pt, <->,>=latex](0.5, -0.5)--node[above=0.2mm]{$\boldsymbol{h}_2$}(1.75, -0.5);
\draw[line width=0.75pt, <->,>=latex]
(-3.1, -0.5) --  node[above=0.2mm]{$h_3$} (-2.2, -0.5);
\draw[line width=0.75pt, <->,>=latex]
(3.1, -0.5) --  node[above=0.2mm]{$h_4$} (2.2, -0.5);
\draw[->,>=latex,densely dotted,line width=0.75pt] (-0.75,1.2) .. controls (-1.1,1.9) and (-1.7,1.2) .. node[above=0.2mm]{$m_1$}(-1.825,0.8);
\draw[->,>=latex,dashed] (-2.25,0.85) .. controls (-2.375,1.2) and (-3.125,1.5) .. node[above=0.2mm]{$m_1$}(-3.25,0);
\draw[->,>=latex,densely dotted,line width=0.75pt] (0.75,1.2) .. controls (1.1,1.9) and (1.7,1.2) .. node[above=0.2mm]{$m_2$}(1.825,0.8);
\draw[->,>=latex,dashed] (2.25,0.85) .. controls (2.375,1.2) and (3.125,1.5) .. node[above=0.2mm]{$m_2$}(3.25,0);
\draw[<-,>=latex,densely dotted,line width=0.75pt] (-2.125,-1.2) .. controls (-2.25,-1.6) and (-3.25,-1.6) .. node[below=0.2mm]{$m_3$}(-3.25,-1.2);
\draw[<-,>=latex,dashed] (-0.125,-1.2) .. controls (-0.25,-1.6) and (-1.7,-1.6) .. node[below=0.2mm]{$m_3$}(-1.75,-1.2);
\draw[<-,>=latex,densely dotted,line width=0.75pt] (2.125,-1.2) .. controls (2.25,-1.6) and (3.25,-1.6) .. node[below=0.2mm]{$m_4$}(3.25,-1.2);
\draw[<-,>=latex,dashed] (0.125,-1.2) .. controls (0.25,-1.6) and (1.7,-1.6) .. node[below=0.2mm]{$m_4$}(1.75,-1.2);
\node at (0,-1.8) {$BS$};
\node at (-1.95,-1.8) {$R_1$};
\node at (1.95,-1.8) {$R_2$};
\node at (-3.25,-1.8) {$U_1$};
\node at (3.45,-1.8) {$U_2$};
\draw [fill=white] (-3.75,2) rectangle (3.75,2.5);
\draw[line width=0.75pt,densely dotted] (-3.5,2.25) -- (-2.75,2.25);
\node at (-1.5,2.22) {Uplink Phase};
\draw[densely dashed] (0,2.25) -- (0.75,2.25);
\node at (2.25,2.25) {Downlink Phase};
\end{tikzpicture}
\caption{System model with slow fading channels $\boldsymbol{h}_1,\dots,h_4$ and flow of messages $m_1,\dots,m_4$.}
\label{fig:Communication System}
\end{figure}
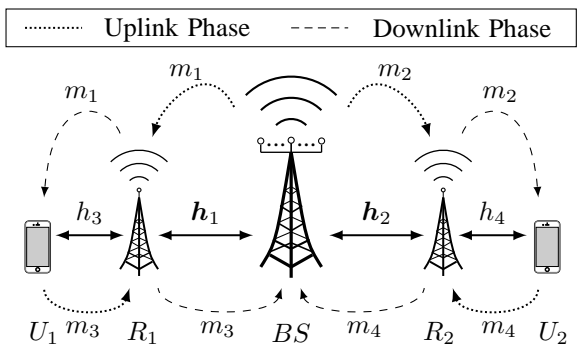

\subsection{The Uplink Phase}

This is the part of the transmission process in which all messages are to be sent towards the femtocell BSs. Since the femtocell BSs' task is to aid the communication process, they act as relays ($R_1$ and $R_2$). The baseband complex symbols received at $R_1$ and $R_2$ can therefore be expressed, respectively, as
\begin{align}
y_{R_1} = \boldsymbol{h}_1^H\boldsymbol{x}_{BS}+h_3\sqrt{P_{U_1}}u_{m_3}+z_{R_1}\label{yr1},\\
y_{R_2} = \boldsymbol{h}_2^H\boldsymbol{x}_{BS}+h_4\sqrt{P_{U_2}}u_{m_4}+z_{R_2}\label{yr2},
\end{align}
by making use of the possibility to transmit the lattice codewords $u_{m_1},u_{m_2}$ via beamforming
\begin{align}
\boldsymbol{x}_{BS}=\boldsymbol{w}_1u_{m_1}+\boldsymbol{w}_2u_{m_2},
\end{align}
using a random encoding \cite{Cover} of the messages $m_i$ onto $u_{m_i}$. Note that the encoding does not take into account the received signals from previous channel uses, thus restrictive encoding \cite{Anas} is applied. Naturally, the performance could be improved by using non-restrictive encoding, i.e., utilizing the received signals for the encoding. This, however, is not the focus of this work. The noises at the femtocell BSs are denoted by $z_{R_1},z_{R_2}$ and without loss of generality all noises are assumed to be additive white Gaussian noises (AWGN) with zero-mean and unit-variance. The transmit beamformers are denoted by $\boldsymbol{w}_1,\boldsymbol{w}_2\in\mathbb{C}^{M\times 1}$ and are subject to optimization. Also note that the utilized power of the macrocell BS is $P_{BS}=||\boldsymbol{w}_1||^2+||\boldsymbol{w}_2||^2$, and the power consumed by the mobile users are $P_{U_1}$ and $P_{U_2}$, respectively. \\
Upon reception, the femtocell BSs decode and process the signals by applying compute-and-forward \cite{LatticesRelay} to generate the transmit symbols for the downlink phase in the next phase.

\subsection{The Downlink Phase}
Once the transmit symbols are computed, the femtocell BSs will forward these signals to the macrocell BS and the mobile users. We refer to this step as the downlink phase. Because the femtocell BSs deploy only a single antenna, this step is done by a simple broadcast, leading to
\begin{align}
y_{U_1} = h_3\sqrt{P_{R_1}}u_{R_1}+z_{U_1}\label{yu1},\\
y_{U_2} = h_4\sqrt{P_{R_2}}u_{R_2}+z_{U_2}\label{yu2},
\end{align}
where $z_{U_1},z_{U_2}$ is the noise at the mobile users and $u_{R_1},u_{R_2}$ are the transmit symbols picked from a Gaussian codebook.\\
In contrast to the mobile users, the macrocell BS receives the signals from both femtocell BSs at the same time. We therefore can express the received signal at the macrocell BS as
\begin{align*}
\boldsymbol{y}_{BS} = \boldsymbol{h}_1\sqrt{P_{R_1}}u_{R_1}+\boldsymbol{h}_2\sqrt{P_{R_2}}u_{R_2}+\boldsymbol{z}_{BS},\label{ybs}
\end{align*}
where $\boldsymbol{y}_{BS},\boldsymbol{z}_{BS}\in\mathbb{C}^{M\times 1}$ and $\boldsymbol{z}_{BS}\sim\mathcal{CN}(\boldsymbol{0},\boldsymbol{I})$ is the AWGN at the macrocell BS.\\
To obtain an estimate for the transmitted symbols we write
\begin{equation*}
\begin{aligned}
\hat{u}_{m_3}=&\boldsymbol{v}_1^H\boldsymbol{y}_{BS},\\
\hat{u}_{m_4}=&\boldsymbol{v}_2^H\boldsymbol{y}_{BS},
\end{aligned}
\end{equation*}
where $\boldsymbol{v}_1,\boldsymbol{v}_2\in\mathbb{C}^{M\times 1}$ are receive beamformers. Without loss of generality it is assumed that $||\boldsymbol{v}_1||,||\boldsymbol{v}_2||=1$.\\
In the following we first state the main results before we derive the data rate expressions step by step to formulate the overall optimization problem.

\section{Main Results}\label{Section3}

The main contribution of this work is summarized below.\\
In the following we derive rate expressions for both the proposed compute-and-forward strategy and the cut-set bounds of the setup depicted in Fig. \ref{fig:Communication System}. We numerically compare our achievability scheme to the upper bounds to illustrate that the proposed strategy is close to optimal.\\
We also show that the optimal beamforming strategy consist of beamformers which lie in the span of $\boldsymbol{h}_1$ and $\boldsymbol{h}_2$. With this result we are able to construct beamformers appropriately. 

\section{Data Rates and Problem Formulation}\label{Section4}

\subsection{Uplink Phase}

The transmission of messages $m_1$ and $m_2$ from the macrocell BS to the femtocell BSs represents a broadcast channel. Thus, by using \textit{Dirty Paper Coding} \cite{Costa}, \cite{DirtyLattices}, we can cancel non-causally known interference without power penalty. Thus, exploiting the properties of codewords from nested lattices \cite{LatticesRelay}, the following signal-to-interference-plus-noise ratios (SINRs) for messages $m_1,m_2$ can be achieved
\begin{align}
\gamma_{1,UL}&=\frac{|\boldsymbol{w}_1^H\boldsymbol{h}_1|^2}{1+b_{EO}\cdot|\boldsymbol{w}_2^H\boldsymbol{h}_1|^2},\label{EO1}\\
\gamma_{2,UL}&=\frac{|\boldsymbol{w}_2^H\boldsymbol{h}_2|^2}{1+(1-b_{EO})\cdot|\boldsymbol{w}_1^H\boldsymbol{h}_2|^2}\label{EO2},
\end{align}
where $b_{EO}\in\{0,1\}$ is a binary variable indicating the encoding order (EO) at the macrocell BS. Please note that the second index denotes the uplink phase (UL) and that $b_{EO}=1$ refers to the case where message $m_1$ is encoded first.\\
When the mobile users $U_1$ and $U_2$ transmit towards the femtocell BSs in the uplink phase, messages $m_3$ and $m_4$ are observed at the femtocell BS with a SINR of
\begin{align}
\gamma_{3,UL}&=\frac{|h_3|^2P_{U_1}}{1+b_{EO}\cdot|\boldsymbol{w}_2^H\boldsymbol{h}_1|^2}\label{ULSINR1},\\
\gamma_{4,UL}&=\frac{|h_4|^2P_{U_2}}{1+(1-b_{EO})\cdot|\boldsymbol{w}_1^H\boldsymbol{h}_2|^2}\label{ULSINR2},
\end{align}
where $P_{U_1},P_{U_2}$ are the transmit powers of the mobile users, respectively.\\
By transferring the results of \cite{LatticesRelay}, \cite{TwoWay}, \cite{Anas} to the considered setup, we can express the uplink data rates for messages $m_1,\dots,m_4$ as follows
\begin{align}
R_{m_1,UL}&\leq\frac{1}{2}\biggl\lbrack\log_2\Bigl(\frac{\gamma_{1,UL}}{\gamma_{1,UL}+\gamma_{3,UL}}+\gamma_{1,UL}\Bigl)\biggr\rbrack^+,\label{UL1}\\
R_{m_2,UL}&\leq\frac{1}{2}\biggl\lbrack\log_2\Bigl(\frac{\gamma_{2,UL}}{\gamma_{2,UL}+\gamma_{4,UL}}+\gamma_{2,UL}\Bigl)\biggr\rbrack^+,\\
R_{m_3,UL}&\leq\frac{1}{2}\biggl\lbrack\log_2\Bigl(\frac{\gamma_{3,UL}}{\gamma_{1,UL}+\gamma_{3,UL}}+\gamma_{3,UL}\Bigl)\biggr\rbrack^+,\\
R_{m_4,UL}&\leq\frac{1}{2}\biggl\lbrack\log_2\Bigl(\frac{\gamma_{4,UL}}{\gamma_{2,UL}+\gamma_{4,UL}}+\gamma_{4,UL}\Bigl)\biggr\rbrack^+,\label{UL4}
\end{align}
with $\gamma_{1,UL},\dots,\gamma_{4,UL}$ from (\ref{EO1}) - (\ref{ULSINR2}) respectively, and $[\cdot]^+\overset{\Delta}{=}\max\{\cdot,0\}$. Note that the factor of $\frac{1}{2}$ comes from the fact that all devices operate in half duplex mode \cite{Factor}.

\subsection{Downlink Phase}

The femtocell BSs broadcast their codewords which are based on the compute-and-forward \cite{CAFPro} processing of their received signals in the uplink phase. The mobile users and the macrocell BS can use their messages as side information to improve the performance (see \cite{MyPaper} for details). We obtain the following SNRs at the mobile users
\begin{align}
\gamma_{1,DL}&=|h_3|^2P_{R_1},\label{DO1}\\
\gamma_{2,DL}&=|h_4|^2P_{R_2},
\end{align}
where $P_{R_1},P_{R_2}$ are the transmit powers of the femtocell BSs, respectively.\\
At the macrocell BSs successive decoding is applied \cite{BCMACDual}. This allows to cope with some of the interference without further power penalty, leading to
\begin{align}
\gamma_{3,DL}&=\frac{|\boldsymbol{v}_1^H\boldsymbol{h}_1|^2P_{R_1}}{1+b_{DO}\cdot|\boldsymbol{v}_1^H\boldsymbol{h}_2|^2P_{R_2}},\\
\gamma_{4,DL}&=\frac{|\boldsymbol{v}_2^H\boldsymbol{h}_2|^2P_{R_2}}{1+(1-b_{DO})\cdot|\boldsymbol{v}_2^H\boldsymbol{h}_1|^2P_{R_1}},\label{DO2}
\end{align}
where $b_{DO}\in\{0,1\}$ is a binary variable indicating the decoding order (DO). Note that $b_{DO}=1$ refers to the case where codeword $u_{m_3}$ is decoded first.\\
To summarize the downlink phase up, we can express the rates for messages $m_1,\dots,m_4$ as
\begin{align}
R_{m_i,DL}\leq\frac{1}{2}\log_2(1+\gamma_{i,DL}),~i=1,\dots,4.
\end{align}

\subsection{Overall Rate Expressions and Problem Formulation}

The overall rate of an arbitrary message in Fig. \ref{fig:Communication System} with the presented approach is upper-bounded by the minimum of its uplink and downlink rate
\begin{align}
R_{m_i}&\leq\min\bigl\{R_{m_i,UL},R_{m_i,DL}\bigr\},~i=1,\dots,4,\label{Theorem1}
\end{align}
which can be seen from the fact that always the weakest link determines the maximum rate from and towards a user \cite{NIT}. We then formulate the sum-rate max-min problem of optimal power allocation and beamformer design as follows
\begin{equation}
\begin{array}{cl}
\underset{R_{m_i},P_j,\boldsymbol{w}_k,\boldsymbol{v}_l,b_n}{\text{maximize}}\label{OptPro}
& \underset{i=1}{\overset{4}{\sum}}\min\bigl\{R_{m_i,UL},R_{m_i,DL}\bigr\} \\
\text{subject to}
& (\ref{Theorem1}),~0\leq P_{j}\leq P_{j,\text{MAX}},\\
& \underset{k}{\sum}||\boldsymbol{w}_k||^2\leq P_{BS,{\text{MAX}}},\\
& ||\boldsymbol{v}_l||^2=1,~b_{n}\in\{0,1\},
\end{array}
\end{equation}
where $j=U_1,\dots,R_2,~k=1,2,~l=1,2$ and $n=EO,DO$. $P_{j,\text{MAX}},P_{BS,\text{MAX}}\in\mathbb{R}^+$ are the maximum available powers at the mobile users, the femtocell BSs and the macrocell BS, respectively.

\section{Zero-Forcing and Efficient Power Allocation}\label{section5}
To translate the max-min problem into a simpler maximization-only problem, we rewrite constraints (\ref{Theorem1}) with the help of
\begin{equation}\label{Step1}
\begin{aligned}
   R_{m_i}&\leq\min\{R_{m_i,UL},R_{m_i,DL}\}\\
   \Leftrightarrow R_{m_i}&=
      \begin{cases}
     R_{m_i}\leq R_{m_i,UL}\\
     R_{m_i}\leq R_{m_i,DL}
   \end{cases}.
\end{aligned}
\end{equation}
Secondly, we simplify the fraction terms in (\ref{UL1}) - (\ref{UL4}) that were introduced due to the nested lattice codes. By choosing identical lattices we get
\begin{align}
R_{m_i,UL}\leq\frac{1}{2}\biggl\lbrack\log\Bigl(\frac{1}{2}+\gamma_{i,UL}\Bigl)\biggr\rbrack^+,~i=1,\dots,4,\label{Step2}
\end{align}
which gives a lower bound on the proposed achievability scheme.\\
Before we proceed, we make the following statement.\newpage
\begin{lemma}\label{FirstTheorem}
The optimal beamformers fulfill $\boldsymbol{w}_1^*,\boldsymbol{w}_2^*\in\operatorname{span}(\boldsymbol{h}_1,\boldsymbol{h}_2)$, which is only a two-dimensional subspace of $\mathbb{C}^M$.
\end{lemma}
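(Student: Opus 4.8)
The plan is to show that any component of the optimal beamformers $\boldsymbol{w}_1^*,\boldsymbol{w}_2^*$ lying outside $\operatorname{span}(\boldsymbol{h}_1,\boldsymbol{h}_2)$ is wasteful: it consumes transmit power without contributing to any of the relevant signal terms, and can therefore be removed without decreasing the objective while helping to satisfy the power constraint. The key observation is that the beamformers $\boldsymbol{w}_1,\boldsymbol{w}_2$ enter the optimization problem (\ref{OptPro}) \emph{only} through the inner products $\boldsymbol{w}_1^H\boldsymbol{h}_1$, $\boldsymbol{w}_2^H\boldsymbol{h}_1$, $\boldsymbol{w}_1^H\boldsymbol{h}_2$, $\boldsymbol{w}_2^H\boldsymbol{h}_2$ appearing in the SINRs (\ref{EO1})--(\ref{ULSINR2}), and through the power expression $P_{BS}=\|\boldsymbol{w}_1\|^2+\|\boldsymbol{w}_2\|^2$ in the constraint.

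First I would write the orthogonal decomposition $\boldsymbol{w}_k=\boldsymbol{w}_k^{\parallel}+\boldsymbol{w}_k^{\perp}$ for $k=1,2$, where $\boldsymbol{w}_k^{\parallel}\in\operatorname{span}(\boldsymbol{h}_1,\boldsymbol{h}_2)$ is the orthogonal projection onto the subspace and $\boldsymbol{w}_k^{\perp}$ lies in its orthogonal complement. Since $\boldsymbol{w}_k^{\perp}$ is orthogonal to both $\boldsymbol{h}_1$ and $\boldsymbol{h}_2$, every inner product satisfies $\boldsymbol{w}_k^H\boldsymbol{h}_j=(\boldsymbol{w}_k^{\parallel})^H\boldsymbol{h}_j$ for $j=1,2$. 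Hence all four SINRs in (\ref{EO1})--(\ref{ULSINR2}), and consequently all uplink rates (\ref{UL1})--(\ref{UL4}), depend on $\boldsymbol{w}_1,\boldsymbol{w}_2$ only through their projections $\boldsymbol{w}_1^{\parallel},\boldsymbol{w}_2^{\parallel}$. The downlink rates do not involve $\boldsymbol{w}_1,\boldsymbol{w}_2$ at all, so the entire objective is invariant under replacing $\boldsymbol{w}_k$ by $\boldsymbol{w}_k^{\parallel}$.

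Next I would invoke the Pythagorean identity $\|\boldsymbol{w}_k\|^2=\|\boldsymbol{w}_k^{\parallel}\|^2+\|\boldsymbol{w}_k^{\perp}\|^2\geq\|\boldsymbol{w}_k^{\parallel}\|^2$, so that the projected beamformers consume no more power than the originals and therefore still satisfy the constraint $\sum_k\|\boldsymbol{w}_k\|^2\leq P_{BS,\text{MAX}}$. Combining the two facts gives the argument by contradiction: if an optimal solution had $\boldsymbol{w}_k^{\perp}\neq\boldsymbol{0}$ for some $k$, then replacing $\boldsymbol{w}_k$ by $\boldsymbol{w}_k^{\parallel}$ yields a feasible solution achieving the same objective value but with strictly smaller power consumption, and the freed power could be reallocated (or the solution is at least no worse), so there is always an optimal solution with $\boldsymbol{w}_1^*,\boldsymbol{w}_2^*\in\operatorname{span}(\boldsymbol{h}_1,\boldsymbol{h}_2)$. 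Since $\boldsymbol{h}_1,\boldsymbol{h}_2\in\mathbb{C}^{M\times1}$ span at most a two-dimensional subspace, the dimensionality claim follows immediately.

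**The main obstacle** I anticipate is purely expository rather than technical: the reduction itself is a one-line projection argument, so the real work is making precise the claim that freeing power ``cannot hurt.'' Strictly, the objective is nondecreasing in available transmit power only because each uplink SINR is monotonically increasing in the useful signal term; I would need to state that reallocating the saved power $\|\boldsymbol{w}_k^{\perp}\|^2$ back into the in-subspace directions can only increase $|\boldsymbol{w}_k^H\boldsymbol{h}_j|^2$, hence the SINRs, hence the rates. If instead one prefers the cleaner formulation, I would simply argue that the projected solution is feasible and achieves an objective value \emph{at least as large}, which already proves existence of an optimal solution inside the span and suffices for the lemma as stated.
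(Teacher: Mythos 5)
Your proof is correct, but note that the paper itself does not prove Lemma~\ref{FirstTheorem}: its ``proof'' is a single citation to the companion paper \cite{MyPaper} (listed as \emph{to be submitted}), so there is no in-paper argument to compare yours against line by line. Your projection argument is the standard one and is self-contained, which is arguably more than the paper offers: the beamformers enter the optimization only through the inner products $\boldsymbol{w}_k^H\boldsymbol{h}_j$ appearing in (\ref{EO1})--(\ref{ULSINR2}) and through the norms $\|\boldsymbol{w}_k\|^2$ in the power constraint, so replacing each $\boldsymbol{w}_k$ by its projection onto $\operatorname{span}(\boldsymbol{h}_1,\boldsymbol{h}_2)$ leaves every rate unchanged while only loosening the power constraint. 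One caution on your intermediate step: the claim that the freed power $\|\boldsymbol{w}_k^{\perp}\|^2$ can be ``reallocated'' to increase the SINRs is not automatic, because enlarging $|\boldsymbol{w}_2^H\boldsymbol{h}_1|^2$ or $|\boldsymbol{w}_1^H\boldsymbol{h}_2|^2$ inflates the interference terms in the denominators of $\gamma_{1,UL},\gamma_{3,UL}$ (respectively $\gamma_{2,UL},\gamma_{4,UL}$), so a blind power increase can lower other rates; monotonicity of the objective in the in-span power is not a one-liner here. Your fallback formulation---the projected solution is feasible and attains the same objective value, hence an optimal solution exists inside the span---sidesteps this entirely and is all the lemma requires when read, as it should be, as a without-loss-of-optimality statement. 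Indeed the existence reading is the only defensible one: whenever the minima in (\ref{Theorem1}) are attained by the downlink rates, or the BS power constraint is slack, one can append a nonzero $\boldsymbol{w}_k^{\perp}$ to an optimal solution without changing anything, so not every optimizer lies in the span.
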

\begin{proof}[Proof]
See \cite{MyPaper}.
\end{proof}
By the help of Lemma \ref{FirstTheorem} we are able to construct the zero-forcing transmit beamformers as below.
\begin{align}
\boldsymbol{w}_{1,ZF} = 
\begin{cases} \sqrt{P_1}\cdot\frac{\boldsymbol{h_1}}{||\boldsymbol{h}_1||} &\mbox{if } b_{EO}=0, \label{Step2.5}\\
\sqrt{P_1}\cdot\frac{\boldsymbol{h_{1,\boldsymbol{h}_2^\bot}}}{||\boldsymbol{h_{1,\boldsymbol{h}_2^\bot}}||} &\mbox{if } b_{EO}=1,
\end{cases}\\
\boldsymbol{w}_{2,ZF} = 
\begin{cases} 
\sqrt{P_2}\cdot\frac{\boldsymbol{h_{2,\boldsymbol{h}_1^\bot}}}{||\boldsymbol{h_{2,\boldsymbol{h}_1^\bot}}||} &\mbox{if } b_{EO}=0,\\
\sqrt{P_2}\cdot\frac{\boldsymbol{h_2}}{||\boldsymbol{h}_2||} &\mbox{if } b_{EO}=1,
\end{cases}
\end{align}
with $P_1+P_2\leq P_{BS,\text{MAX}}$ being optimization parameters, indicating the power distribution of the macrocell BS among its beamformers, and
\begin{equation*}
\begin{aligned}
\boldsymbol{h_{1,\boldsymbol{h}_2^\bot}}&=\boldsymbol{h}_1-\frac{\boldsymbol{h}_2^H\boldsymbol{h}_1}{||\boldsymbol{h}_2||^2}\boldsymbol{h}_2,\\
\boldsymbol{h_{2,\boldsymbol{h}_1^\bot}}&=\boldsymbol{h}_2-\frac{\boldsymbol{h}_1^H\boldsymbol{h}_2}{||\boldsymbol{h}_1||^2}\boldsymbol{h}_1.
\end{aligned}
\end{equation*}
We construct the zero-forcing receive beamformers in a similar fashion by
\begin{align}
\boldsymbol{v}_{1,ZF} = 
\begin{cases} \frac{\boldsymbol{h_1}}{||\boldsymbol{h}_1||} &\mbox{if } b_{DO}=0, \\
\frac{\boldsymbol{h_{1,\boldsymbol{h}_2^\bot}}}{||\boldsymbol{h_{1,\boldsymbol{h}_2^\bot}}||} &\mbox{if } b_{DO}=1,
\end{cases}\\
\boldsymbol{v}_{2,ZF} = 
\begin{cases} 
\frac{\boldsymbol{h_{2,\boldsymbol{h}_1^\bot}}}{||\boldsymbol{h_{2,\boldsymbol{h}_1^\bot}}||} &\mbox{if } b_{DO}=0,\\
\frac{\boldsymbol{h_2}}{||\boldsymbol{h}_2||} &\mbox{if } b_{DO}=1.\label{Step3}
\end{cases}
\end{align}
Note that without loss of generality we fixed $||\boldsymbol{v}_1||,||\boldsymbol{v}_2||$ to be equal to 1. Please also note that by designing the beamformers according to (\ref{Step2.5})-(\ref{Step3}), we are able to cancel all remaining interference that is present during the transmission process.\\
Combining (\ref{Step1}) - (\ref{Step3}), we can rewrite constraints (\ref{Theorem1}) as
\begin{equation}\label{EasyConstraint}
\begin{aligned}
R_{m_1}&\leq0.5\Bigl[\log_2(\nicefrac{1}{2}+|\boldsymbol{h}_1^H\boldsymbol{w}_{1,ZF}|^2)\Bigr]^+\\
R_{m_1}&\leq0.5\log_2(1+|h_3|^2P_{R_1})\\
R_{m_2}&\leq0.5\Bigl[\log_2(\nicefrac{1}{2}+|\boldsymbol{h}_2^H\boldsymbol{w}_{2,ZF}|^2)\Bigr]^+\\
R_{m_2}&\leq0.5\log_2(1+|h_4|^2P_{R_2})\\
R_{m_3}&\leq0.5\Bigl[\log_2(\nicefrac{1}{2}+|h_3|^2P_{U_1})\Bigr]^+\\
R_{m_3}&\leq0.5\log_2(1+|\boldsymbol{h}_1^H\boldsymbol{v}_{1,ZF}|^2P_{R_1})\\
R_{m_4}&\leq0.5\Bigl[\log_2(\nicefrac{1}{2}+|h_4|^2P_{U_2})\Bigr]^+\\
R_{m_4}&\leq0.5\log_2(1+|\boldsymbol{h}_2^H\boldsymbol{v}_{2,ZF}|^2P_{R_2}).
\end{aligned}
\end{equation}
By the nature of these constraints we see that setting 
\begin{equation}\label{Power}
P_j=P_{j,\text{MAX}}
\end{equation}
will result in the highest overall rate. Note that there might be many different power allocations to achieve the maximum rate of the setup, which can be seen from (\ref{Theorem1}). Our goal, however, is to find the most efficient allocation to achieve this rate.\\
To this end, we firstly determine the maximum throughput of the system, by rewriting (\ref{OptPro}) as

\begin{equation}
\begin{array}{cl}
\underset{R_{m_i},P_1,P_2,b_n}{\text{maximize}}\label{FinalOpt2}
& \underset{i=1}{\overset{4}{\sum}}R_{m_i} \\
\text{subject to}
& (\ref{Step2.5})-(\ref{Step3}),(\ref{EasyConstraint}),(\ref{Power}),\\
& P_1+P_2=P_{BS,\text{MAX}},\\
& b_{n}\in\{0,1\}, ~n=EO,DO,
\end{array}
\end{equation}
which is a convex optimization problem once the en- and decoding order are fixed.\\
Secondly, after (\ref{FinalOpt2}) is solved, we set 
\begin{equation}
\begin{aligned}
P_{BS}^{\text{opt}}&=\max\Bigl\{\frac{2^{R_1}}{|\boldsymbol{h}_1^H\boldsymbol{w}_{1,ZF}|^2},\frac{2^{R_2}}{|\boldsymbol{h}_2^H\boldsymbol{w}_{2,ZF}|^2}\Bigr\},\\
P_{R_1}^{\text{opt}}&=\max\Bigl\{\frac{2^{R_1}-\nicefrac{1}{2}}{|h_3|^2},\frac{2^{R_3}-\nicefrac{1}{2}}{|\boldsymbol{h}_1^H\boldsymbol{v}_{1,ZF}|^2}\Bigr\},\\
P_{R_2}^{\text{opt}}&=\max\Bigl\{\frac{2^{R_2}-\nicefrac{1}{2}}{|h_4|^2},\frac{2^{R_4}-\nicefrac{1}{2}}{|\boldsymbol{h}_2^H\boldsymbol{v}_{2,ZF}|^2}\Bigr\},\\
P_{U_1}^{\text{opt}}&=\frac{2^{R_3}}{|h_3|^2},~P_{U_1}^{\text{opt}}=\frac{2^{R_4}}{|h_4|^2},
\end{aligned}
\end{equation}
which will give the smallest required powers to achieve the maximum sum rate and can be seen from (\ref{EasyConstraint}). We will refer to this approach as the zero-forcing efficient power allocation (ZF-EPA) scheme from now on.\\
It is yet to investigate how well ZF-EPA performs with regard to the cut-set upper bound, which will be established in the following section.

\section{Cut-set Bound}\label{Section6}

The cut-set bound will give an upper bound on the capacity of the studied network. Doing the relevant cuts as displayed in Fig. \ref{fig:CutSet} and using bounding techniques (details in \cite{MyPaper}) will give

\begin{equation}
\begin{aligned}
R_{m_1}+R_{m_2}&\leq\frac{1}{2}\min\{\alpha_1,\beta_1,\delta_1,\psi_1\},\\
R_{m_3}+R_{m_4}&\leq\frac{1}{2}\min\{\alpha_2,\beta_2,\delta_2,\psi_2\},
\end{aligned}
\end{equation}
with
\begin{equation*}
\begin{aligned}
\alpha_1=&\log_2(1+\lambda_1 P_1)+\log_2(1+\lambda_2P_2),\\
\alpha_2=&\log_2(1+\lambda_1 P_{R_1})+\log_2(1+\lambda_2P_{R_2}),\\
\beta_1=&\log_2(1+|h_3|^2P_{R_1})+\log_2(1+|h_4|^2P_{R_2}),\\
\beta_2=&\log_2(1+|h_3|^2P_{U_1})+\log_2(1+|h_4|^2P_{U_2}),\\
\delta_1=&\log_2(1+||\boldsymbol{h}_1||^2P_1)+\log_2(1+|h_4|^2P_{R_2}),\\
\delta_2=&\log_2(1+||\boldsymbol{h}_1||^2P_{R_1})+\log_2(1+|h_4|^2P_{U_2}),\\
\psi_1=&\log_2(1+||\boldsymbol{h}_2||^2 P_2)+\log_2(1+|h_3|^2P_{R_2}),\\
\psi_2=&\log_2(1+||\boldsymbol{h}_2||^2 P_{R_2})+\log_2(1+|h_3|^2P_{U_1}),
\end{aligned}
\end{equation*}
where $\alpha_i,\beta_i,\delta_i,\psi_i$ have been derived from Cuts 1-4, respectively, and $\lambda_1,\lambda_2$ are the eigenvalues of $\boldsymbol{HH}^H$ with \begin{equation*}
\boldsymbol{H}=\begin{bmatrix}
\boldsymbol{h}_1 & \boldsymbol{h}_2
\end{bmatrix}^H.
\end{equation*} The factor of $\frac{1}{2}$ again originates from half-duplex mode. For details we refer the interested reader to \cite{MyPaper}.
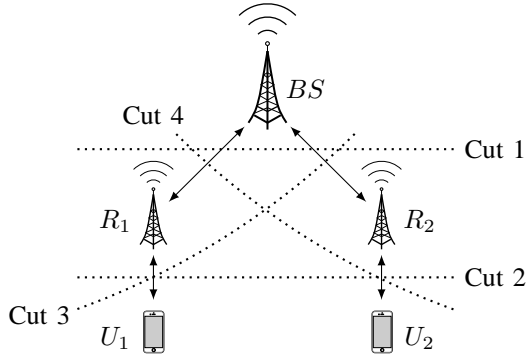
\begin{figure}[!tb]
\centering
\begin{tikzpicture}
\basestation{0}{3}{0.5}
\node[color=black] at (0.5,3) {$BS$};
\basestation{-1.5}{1.25}{0.35}
\node[color=black] at (-2,1.25) {$R_1$};
\basestation{1.5}{1.25}{0.35}
\node[color=black] at (2,1.25) {$R_2$};
\iPhone{-1.63}{-0.5}{0.4}
\node[color=black] at (-2,-0.3) {$U_1$};
\iPhone{1.375}{-0.5}{0.4}
\node[color=black] at (2,-0.3) {$U_2$};
\draw[<->,>=latex] (-0.3,2.5) --(-1.3,1.5);
\draw[<->,>=latex] (0.3,2.5) --(1.3,1.5);
\draw[<->,>=latex] (-1.5,0.8) --(-1.5,0.2);
\draw[<->,>=latex] (1.5,0.8) --(1.5,0.2);
\draw[thick,dotted] (-2.5,2.2) --(2.5,2.2);
\draw[thick,dotted] (-2.5,0.5) --(2.5,0.5);
\draw [thick,dotted](-1.2,2.4) arc (225:250:10cm);
\draw [thick,dotted](-2.5,0.08) arc (290:315:10cm);
\node[color=black] at (3,2.2) {Cut 1};
\node[color=black] at (3,0.5) {Cut 2};
\node[color=black] at (-3,0) {Cut 3};
\node[color=black] at (-1.5,2.65) {Cut 4};
\end{tikzpicture}
\caption{Discussed setup with the cuts made to establish an upper bounds on the capacity.}
\label{fig:CutSet}
\end{figure}

\section{Numerical Results and Simulation}\label{section7}

\subsection{Optimal Power Allocation}
Fig. \ref{fig:Simu1} shows the powers of the macrocell BS and the mobile users as variable instances among the x- and y-axis, while the powers of the femtocell BSs are fixed $(P_{R_1},P_{R_2}=5)$. It can be seen that the overall rate $R_\Sigma$, which is depicted along the z-axis, is constrained by the resources of the macrocell BS and/or the mobile users for small values of $P_{BS},~P_{U_1}$ and $P_{U_2}$. For larger values of $P_{BS},~P_{U_1}$ and $P_{U_2}$ we can conclude that the message rates of the setup are constrained by the resources of the femtocell BSs. 

\begin{figure}[htb]
\begin{tikzpicture}

\begin{axis}[%
width=0.45*0.4*13.907in,
height=0.5*0.2*8.793in,
at={(0.809in,0.513in)},
scale only axis,
every outer x axis line/.append style={black},
every x tick label/.append style={font=\color{black}},
xmin=0,
xmax=15,
tick align=outside,
xlabel={$P_{U_1},P_{U_2}$},
xmajorgrids,
every outer y axis line/.append style={black},
every y tick label/.append style={font=\color{black}},
ymin=0,
ymax=20,
ylabel={$P_{BS}$},
ymajorgrids,
every outer z axis line/.append style={black},
every z tick label/.append style={font=\color{black}},
zmin=0,
zmax=4,
zlabel={$R_\Sigma\text{ [bits]}$},
zmajorgrids,
view={-37.5}{30},
axis background/.style={fill=white},
axis x line*=bottom,
axis y line*=left,
axis z line*=right
]

\addplot3[%
surf,
shader=flat corner,draw=black,z buffer=sort,colormap/jet,mesh/rows=16]
table[row sep=crcr, point meta=\thisrow{c}] {%
x	y	z	c\\
0	0	0	0\\
0	1	0.586491598521847	0.586491598521847\\
0	2	1.28988545190754	1.28988545190754\\
0	3	1.46380672536551	1.46380672536551\\
0	4	1.46380672536551	1.46380672536551\\
0	5	1.46380672536551	1.46380672536551\\
0	6	1.46380672536551	1.46380672536551\\
0	7	1.46380672536551	1.46380672536551\\
0	8	1.46380672536551	1.46380672536551\\
0	9	1.46380672536551	1.46380672536551\\
0	10	1.46380672536551	1.46380672536551\\
0	11	1.46380672536551	1.46380672536551\\
0	12	1.46380672536551	1.46380672536551\\
0	13	1.46380672536551	1.46380672536551\\
0	14	1.46380672536551	1.46380672536551\\
0	15	1.46380672536551	1.46380672536551\\
0	16	1.46380672536551	1.46380672536551\\
0	17	1.46380672536551	1.46380672536551\\
0	18	1.46380672536551	1.46380672536551\\
0	19	1.46380672536551	1.46380672536551\\
0	20	1.46380672536551	1.46380672536551\\
1	0	0.424534757767006	0.424534757767006\\
1	1	1.01102635628885	1.01102635628885\\
1	2	1.71442020967455	1.71442020967455\\
1	3	1.88834148313252	1.88834148313252\\
1	4	1.88834148313252	1.88834148313252\\
1	5	1.88834148313252	1.88834148313252\\
1	6	1.88834148313252	1.88834148313252\\
1	7	1.88834148313252	1.88834148313252\\
1	8	1.88834148313252	1.88834148313252\\
1	9	1.88834148313252	1.88834148313252\\
1	10	1.88834148313252	1.88834148313252\\
1	11	1.88834148313252	1.88834148313252\\
1	12	1.88834148313252	1.88834148313252\\
1	13	1.88834148313252	1.88834148313252\\
1	14	1.88834148313252	1.88834148313252\\
1	15	1.88834148313252	1.88834148313252\\
1	16	1.88834148313252	1.88834148313252\\
1	17	1.88834148313252	1.88834148313252\\
1	18	1.88834148313252	1.88834148313252\\
1	19	1.88834148313252	1.88834148313252\\
1	20	1.88834148313252	1.88834148313252\\
2	0	0.905596299083864	0.905596299083864\\
2	1	1.49208789760571	1.49208789760571\\
2	2	2.19548175099141	2.19548175099141\\
2	3	2.36940302444938	2.36940302444938\\
2	4	2.36940302444938	2.36940302444938\\
2	5	2.36940302444938	2.36940302444938\\
2	6	2.36940302444938	2.36940302444938\\
2	7	2.36940302444938	2.36940302444938\\
2	8	2.36940302444938	2.36940302444938\\
2	9	2.36940302444938	2.36940302444938\\
2	10	2.36940302444938	2.36940302444938\\
2	11	2.36940302444938	2.36940302444938\\
2	12	2.36940302444938	2.36940302444938\\
2	13	2.36940302444938	2.36940302444938\\
2	14	2.36940302444938	2.36940302444938\\
2	15	2.36940302444938	2.36940302444938\\
2	16	2.36940302444938	2.36940302444938\\
2	17	2.36940302444938	2.36940302444938\\
2	18	2.36940302444938	2.36940302444938\\
2	19	2.36940302444938	2.36940302444938\\
2	20	2.36940302444938	2.36940302444938\\
3	0	1.33573992447221	1.33573992447221\\
3	1	1.92223152299406	1.92223152299406\\
3	2	2.62562537637976	2.62562537637976\\
3	3	2.79954664983773	2.79954664983773\\
3	4	2.79954664983773	2.79954664983773\\
3	5	2.79954664983773	2.79954664983773\\
3	6	2.79954664983773	2.79954664983773\\
3	7	2.79954664983773	2.79954664983773\\
3	8	2.79954664983773	2.79954664983773\\
3	9	2.79954664983773	2.79954664983773\\
3	10	2.79954664983773	2.79954664983773\\
3	11	2.79954664983773	2.79954664983773\\
3	12	2.79954664983773	2.79954664983773\\
3	13	2.79954664983773	2.79954664983773\\
3	14	2.79954664983773	2.79954664983773\\
3	15	2.79954664983773	2.79954664983773\\
3	16	2.79954664983773	2.79954664983773\\
3	17	2.79954664983773	2.79954664983773\\
3	18	2.79954664983773	2.79954664983773\\
3	19	2.79954664983773	2.79954664983773\\
3	20	2.79954664983773	2.79954664983773\\
4	0	1.66481659019786	1.66481659019786\\
4	1	2.25130818871971	2.25130818871971\\
4	2	2.9547020421054	2.9547020421054\\
4	3	3.12862331556337	3.12862331556337\\
4	4	3.12862331556337	3.12862331556337\\
4	5	3.12862331556337	3.12862331556337\\
4	6	3.12862331556337	3.12862331556337\\
4	7	3.12862331556337	3.12862331556337\\
4	8	3.12862331556337	3.12862331556337\\
4	9	3.12862331556337	3.12862331556337\\
4	10	3.12862331556337	3.12862331556337\\
4	11	3.12862331556337	3.12862331556337\\
4	12	3.12862331556337	3.12862331556337\\
4	13	3.12862331556337	3.12862331556337\\
4	14	3.12862331556337	3.12862331556337\\
4	15	3.12862331556337	3.12862331556337\\
4	16	3.12862331556337	3.12862331556337\\
4	17	3.12862331556337	3.12862331556337\\
4	18	3.12862331556337	3.12862331556337\\
4	19	3.12862331556337	3.12862331556337\\
4	20	3.12862331556337	3.12862331556337\\
5	0	1.93184554445999	1.93184554445999\\
5	1	2.51833714298184	2.51833714298184\\
5	2	3.22173099636753	3.22173099636753\\
5	3	3.3956522698255	3.3956522698255\\
5	4	3.3956522698255	3.3956522698255\\
5	5	3.3956522698255	3.3956522698255\\
5	6	3.3956522698255	3.3956522698255\\
5	7	3.3956522698255	3.3956522698255\\
5	8	3.3956522698255	3.3956522698255\\
5	9	3.3956522698255	3.3956522698255\\
5	10	3.3956522698255	3.3956522698255\\
5	11	3.3956522698255	3.3956522698255\\
5	12	3.3956522698255	3.3956522698255\\
5	13	3.3956522698255	3.3956522698255\\
5	14	3.3956522698255	3.3956522698255\\
5	15	3.3956522698255	3.3956522698255\\
5	16	3.3956522698255	3.3956522698255\\
5	17	3.3956522698255	3.3956522698255\\
5	18	3.3956522698255	3.3956522698255\\
5	19	3.3956522698255	3.3956522698255\\
5	20	3.3956522698255	3.3956522698255\\
6	0	2.13786631313436	2.13786631313436\\
6	1	2.72435791165621	2.72435791165621\\
6	2	3.4277517650419		3.4277517650419\\
6	3	3.60167303849987	3.60167303849987\\
6	4	3.60167303849987	3.60167303849987\\
6	5	3.60167303849987	3.60167303849987\\
6	6	3.60167303849987	3.60167303849987\\
6	7	3.60167303849987	3.60167303849987\\
6	8	3.60167303849987	3.60167303849987\\
6	9	3.60167303849987	3.60167303849987\\
6	10	3.60167303849987	3.60167303849987\\
6	11	3.60167303849987	3.60167303849987\\
6	12	3.60167303849987	3.60167303849987\\
6	13	3.60167303849987	3.60167303849987\\
6	14	3.60167303849987	3.60167303849987\\
6	15	3.60167303849987	3.60167303849987\\
6	16	3.60167303849987	3.60167303849987\\
6	17	3.60167303849987	3.60167303849987\\
6	18	3.60167303849987	3.60167303849987\\
6	19	3.60167303849987	3.60167303849987\\
6	20	3.60167303849987	3.60167303849987\\
7	0	2.24283390184567	2.24283390184567\\
7	1	2.82932550036751	2.82932550036751\\
7	2	3.53271935375321	3.53271935375321\\
7	3	3.70664062721118	3.70664062721118\\
7	4	3.70664062721118	3.70664062721118\\
7	5	3.70664062721118	3.70664062721118\\
7	6	3.70664062721118	3.70664062721118\\
7	7	3.70664062721118	3.70664062721118\\
7	8	3.70664062721118	3.70664062721118\\
7	9	3.70664062721118	3.70664062721118\\
7	10	3.70664062721118	3.70664062721118\\
7	11	3.70664062721118	3.70664062721118\\
7	12	3.70664062721118	3.70664062721118\\
7	13	3.70664062721118	3.70664062721118\\
7	14	3.70664062721118	3.70664062721118\\
7	15	3.70664062721118	3.70664062721118\\
7	16	3.70664062721118	3.70664062721118\\
7	17	3.70664062721118	3.70664062721118\\
7	18	3.70664062721118	3.70664062721118\\
7	19	3.70664062721118	3.70664062721118\\
7	20	3.70664062721118	3.70664062721118\\
8	0	2.30641808885602	2.30641808885602\\
8	1	2.89290968737787	2.89290968737787\\
8	2	3.59630354076356	3.59630354076356\\
8	3	3.77022481422153	3.77022481422153\\
8	4	3.77022481422153	3.77022481422153\\
8	5	3.77022481422153	3.77022481422153\\
8	6	3.77022481422153	3.77022481422153\\
8	7	3.77022481422153	3.77022481422153\\
8	8	3.77022481422153	3.77022481422153\\
8	9	3.77022481422153	3.77022481422153\\
8	10	3.77022481422153	3.77022481422153\\
8	11	3.77022481422153	3.77022481422153\\
8	12	3.77022481422153	3.77022481422153\\
8	13	3.77022481422153	3.77022481422153\\
8	14	3.77022481422153	3.77022481422153\\
8	15	3.77022481422153	3.77022481422153\\
8	16	3.77022481422153	3.77022481422153\\
8	17	3.77022481422153	3.77022481422153\\
8	18	3.77022481422153	3.77022481422153\\
8	19	3.77022481422153	3.77022481422153\\
8	20	3.77022481422153	3.77022481422153\\
9	0	2.30641808885602	2.30641808885602\\
9	1	2.89290968737787	2.89290968737787\\
9	2	3.59630354076356	3.59630354076356\\
9	3	3.77022481422153	3.77022481422153\\
9	4	3.77022481422153	3.77022481422153\\
9	5	3.77022481422153	3.77022481422153\\
9	6	3.77022481422153	3.77022481422153\\
9	7	3.77022481422153	3.77022481422153\\
9	8	3.77022481422153	3.77022481422153\\
9	9	3.77022481422153	3.77022481422153\\
9	10	3.77022481422153	3.77022481422153\\
9	11	3.77022481422153	3.77022481422153\\
9	12	3.77022481422153	3.77022481422153\\
9	13	3.77022481422153	3.77022481422153\\
9	14	3.77022481422153	3.77022481422153\\
9	15	3.77022481422153	3.77022481422153\\
9	16	3.77022481422153	3.77022481422153\\
9	17	3.77022481422153	3.77022481422153\\
9	18	3.77022481422153	3.77022481422153\\
9	19	3.77022481422153	3.77022481422153\\
9	20	3.77022481422153	3.77022481422153\\
10	0	2.30641808885602	2.30641808885602\\
10	1	2.89290968737787	2.89290968737787\\
10	2	3.59630354076356	3.59630354076356\\
10	3	3.77022481422153	3.77022481422153\\
10	4	3.77022481422153	3.77022481422153\\
10	5	3.77022481422153	3.77022481422153\\
10	6	3.77022481422153	3.77022481422153\\
10	7	3.77022481422153	3.77022481422153\\
10	8	3.77022481422153	3.77022481422153\\
10	9	3.77022481422153	3.77022481422153\\
10	10	3.77022481422153	3.77022481422153\\
10	11	3.77022481422153	3.77022481422153\\
10	12	3.77022481422153	3.77022481422153\\
10	13	3.77022481422153	3.77022481422153\\
10	14	3.77022481422153	3.77022481422153\\
10	15	3.77022481422153	3.77022481422153\\
10	16	3.77022481422153	3.77022481422153\\
10	17	3.77022481422153	3.77022481422153\\
10	18	3.77022481422153	3.77022481422153\\
10	19	3.77022481422153	3.77022481422153\\
10	20	3.77022481422153	3.77022481422153\\
11	0	2.30641808885602	2.30641808885602\\
11	1	2.89290968737787	2.89290968737787\\
11	2	3.59630354076356	3.59630354076356\\
11	3	3.77022481422153	3.77022481422153\\
11	4	3.77022481422153	3.77022481422153\\
11	5	3.77022481422153	3.77022481422153\\
11	6	3.77022481422153	3.77022481422153\\
11	7	3.77022481422153	3.77022481422153\\
11	8	3.77022481422153	3.77022481422153\\
11	9	3.77022481422153	3.77022481422153\\
11	10	3.77022481422153	3.77022481422153\\
11	11	3.77022481422153	3.77022481422153\\
11	12	3.77022481422153	3.77022481422153\\
11	13	3.77022481422153	3.77022481422153\\
11	14	3.77022481422153	3.77022481422153\\
11	15	3.77022481422153	3.77022481422153\\
11	16	3.77022481422153	3.77022481422153\\
11	17	3.77022481422153	3.77022481422153\\
11	18	3.77022481422153	3.77022481422153\\
11	19	3.77022481422153	3.77022481422153\\
11	20	3.77022481422153	3.77022481422153\\
12	0	2.30641808885602	2.30641808885602\\
12	1	2.89290968737787	2.89290968737787\\
12	2	3.59630354076356	3.59630354076356\\
12	3	3.77022481422153	3.77022481422153\\
12	4	3.77022481422153	3.77022481422153\\
12	5	3.77022481422153	3.77022481422153\\
12	6	3.77022481422153	3.77022481422153\\
12	7	3.77022481422153	3.77022481422153\\
12	8	3.77022481422153	3.77022481422153\\
12	9	3.77022481422153	3.77022481422153\\
12	10	3.77022481422153	3.77022481422153\\
12	11	3.77022481422153	3.77022481422153\\
12	12	3.77022481422153	3.77022481422153\\
12	13	3.77022481422153	3.77022481422153\\
12	14	3.77022481422153	3.77022481422153\\
12	15	3.77022481422153	3.77022481422153\\
12	16	3.77022481422153	3.77022481422153\\
12	17	3.77022481422153	3.77022481422153\\
12	18	3.77022481422153	3.77022481422153\\
12	19	3.77022481422153	3.77022481422153\\
12	20	3.77022481422153	3.77022481422153\\
13	0	2.30641808885602	2.30641808885602\\
13	1	2.89290968737787	2.89290968737787\\
13	2	3.59630354076356	3.59630354076356\\
13	3	3.77022481422153	3.77022481422153\\
13	4	3.77022481422153	3.77022481422153\\
13	5	3.77022481422153	3.77022481422153\\
13	6	3.77022481422153	3.77022481422153\\
13	7	3.77022481422153	3.77022481422153\\
13	8	3.77022481422153	3.77022481422153\\
13	9	3.77022481422153	3.77022481422153\\
13	10	3.77022481422153	3.77022481422153\\
13	11	3.77022481422153	3.77022481422153\\
13	12	3.77022481422153	3.77022481422153\\
13	13	3.77022481422153	3.77022481422153\\
13	14	3.77022481422153	3.77022481422153\\
13	15	3.77022481422153	3.77022481422153\\
13	16	3.77022481422153	3.77022481422153\\
13	17	3.77022481422153	3.77022481422153\\
13	18	3.77022481422153	3.77022481422153\\
13	19	3.77022481422153	3.77022481422153\\
13	20	3.77022481422153	3.77022481422153\\
14	0	2.30641808885602	2.30641808885602\\
14	1	2.89290968737787	2.89290968737787\\
14	2	3.59630354076356	3.59630354076356\\
14	3	3.77022481422153	3.77022481422153\\
14	4	3.77022481422153	3.77022481422153\\
14	5	3.77022481422153	3.77022481422153\\
14	6	3.77022481422153	3.77022481422153\\
14	7	3.77022481422153	3.77022481422153\\
14	8	3.77022481422153	3.77022481422153\\
14	9	3.77022481422153	3.77022481422153\\
14	10	3.77022481422153	3.77022481422153\\
14	11	3.77022481422153	3.77022481422153\\
14	12	3.77022481422153	3.77022481422153\\
14	13	3.77022481422153	3.77022481422153\\
14	14	3.77022481422153	3.77022481422153\\
14	15	3.77022481422153	3.77022481422153\\
14	16	3.77022481422153	3.77022481422153\\
14	17	3.77022481422153	3.77022481422153\\
14	18	3.77022481422153	3.77022481422153\\
14	19	3.77022481422153	3.77022481422153\\
14	20	3.77022481422153	3.77022481422153\\
15	0	2.30641808885602	2.30641808885602\\
15	1	2.89290968737787	2.89290968737787\\
15	2	3.59630354076356	3.59630354076356\\
15	3	3.77022481422153	3.77022481422153\\
15	4	3.77022481422153	3.77022481422153\\
15	5	3.77022481422153	3.77022481422153\\
15	6	3.77022481422153	3.77022481422153\\
15	7	3.77022481422153	3.77022481422153\\
15	8	3.77022481422153	3.77022481422153\\
15	9	3.77022481422153	3.77022481422153\\
15	10	3.77022481422153	3.77022481422153\\
15	11	3.77022481422153	3.77022481422153\\
15	12	3.77022481422153	3.77022481422153\\
15	13	3.77022481422153	3.77022481422153\\
15	14	3.77022481422153	3.77022481422153\\
15	15	3.77022481422153	3.77022481422153\\
15	16	3.77022481422153	3.77022481422153\\
15	17	3.77022481422153	3.77022481422153\\
15	18	3.77022481422153	3.77022481422153\\
15	19	3.77022481422153	3.77022481422153\\
15	20	3.77022481422153	3.77022481422153\\
};
\end{axis}
\end{tikzpicture}
\caption{Once the throughput of the system in constrained by the available power at the femtocell BSs, the use of more power at the macrocell BS or the mobile users does not result in a higher sum rate in the setup  (dark red plateau).}
\label{fig:Simu1}
\end{figure}
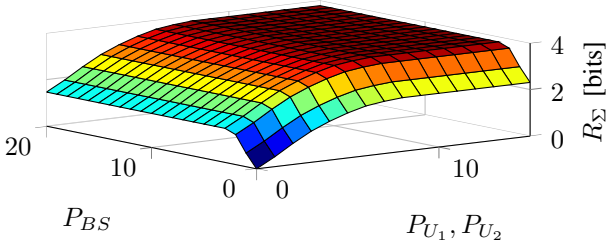

A further improvement of the uplink rate will not result in a higher overall rate. For example, one can see that choosing $P_{U_1},P_{U_2}=9$ and $P_{BS}=4$ leads to the same data throughput as letting $P_{U_1},P_{U_2},P_{BS}\rightarrow\infty$. Fig. \ref{fig:Simu1} confirms the statement made earlier, in which we claim that there are multiple power allocations that result in the maximum throughput of the system, but only one such allocation that consumes the least power.

\subsection{Performance Analysis}

To compare the ZF-PA not only with the cut-set bound, but also with existing schemes, we shall use the scheme of time-division multiple-access (TDMA) as a reference. The rate expressions for TDMA can be derived in a similar way like the ones for ZF-EPA, leading to

\begin{equation}
\begin{aligned}
R_{m_1}^{\text{TDMA}}&\leq\frac{1}{4}\min\Bigl\{\log(1+\gamma_{1,UL}),\log(1+\gamma_{1,DL})\Bigr\},\\
R_{m_2}^{\text{TDMA}}&\leq\frac{1}{4}\min\Bigl\{\log(1+\gamma_{2,UL}),\log(1+\gamma_{2,DL})\Bigr\},\\
R_{m_3}^{\text{TDMA}}&\leq\frac{1}{4}\min\Bigl\{\log(1+|h_3|^2P_{U_1}),\log(1+\gamma_{3,DL})\Bigr\},\\
R_{m_4}^{\text{TDMA}}&\leq\frac{1}{4}\min\Bigl\{\log(1+|h_4|^2P_{U_2}),\log(1+\gamma_{4,DL})\Bigr\},
\end{aligned}
\end{equation}
where the factor of $\frac{1}{4}$ originates from the four time slots needed to transmit all messages.\\
Fig. \ref{fig:Gap} (a) shows a typical performance of ZF-EPA in comparison with TDMA and the cut-set bound derived in section \ref{Section6}. The simulation was done with $M=5$ antennas at the macrocell BS and channels chosen randomly and independently according to $\boldsymbol{h}_1,\boldsymbol{h}_2\sim\mathcal{N}(\boldsymbol{0},\boldsymbol{I}),~h_3,h_4\sim\mathcal{N}(0,1)$. The parameter $P$, which is depicted among the y-axis, is used to match the relative powers available at each device of the setup. These have been chosen such that

\begin{equation}
\begin{aligned}
P_{BS,\text{MAX}}&=P,\\
P_{R_1,\text{MAX}},P_{R_2,\text{MAX}}&=\nicefrac{1}{2}~P,\\
P_{U_1,\text{MAX}},P_{U_2,\text{MAX}}&=\nicefrac{1}{4}~P.
\end{aligned}
\end{equation}  

It becomes evident that ZF-EPA outperforms TDMA significantly in the mid- to high-power regime by a factor of 2. This is because it requires only half the amount of time slots to complete the transmission process. Fig. \ref{fig:Gap} (b) gives a more detailed insight on the gap between the cut-set bound and ZF-EPA. Although spiking at $0.6$ bit in the low power regime, the gap becomes no larger than $0.25$ bit in the mid-power regime. It is around $0.2$ bit in the high-power regime, which is under $6~\%$ of the capacity.

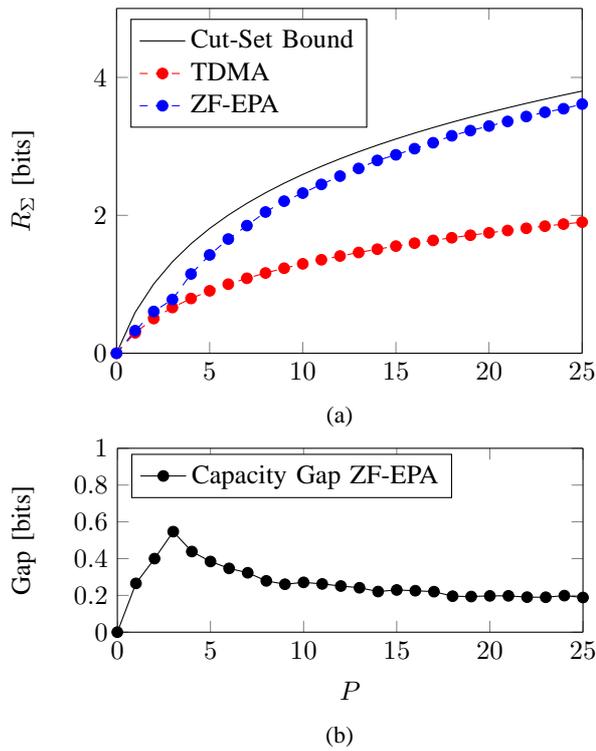
\begin{figure}[htb]
\begin{subfigure}[htb]{1\linewidth}
\begin{tikzpicture}

\begin{axis}[%
width=2.411in,
height=0.9*2in,
at={(0.809in,0.513in)},
scale only axis,
separate axis lines,
every outer x axis line/.append style={black},
every x tick label/.append style={font=\color{black}},
xmin=0,
xmax=25,
every outer y axis line/.append style={black},
every y tick label/.append style={font=\color{black}},
ymin=0,
ymax=5,
ylabel={$R_\Sigma\text{ [bits]}$},
axis background/.style={fill=white},
legend style={at={(0.03,0.97)},anchor=north west,legend cell align=left,align=left,draw=black}
]
\addplot [color=black,mark options={solid}]
  table[row sep=crcr]{%
0	0\\
1	0.59248847516809\\
2	1.0059918487876\\
3	1.32607655016805\\
4	1.588035446936\\
5	1.81013436113522\\
6	2.00312605532987\\
7	2.17389914913995\\
8	2.32714128433977\\
9	2.46619072587138\\
10	2.59351092354512\\
11	2.71097264823393\\
12	2.82003055632132\\
13	2.92183839830429\\
14	3.0173268485546\\
15	3.10725765015285\\
16	3.19226224280289\\
17	3.2728699291622\\
18	3.34952880971306\\
19	3.4226216082055\\
20	3.49247781624479\\
21	3.55938313984824\\
22	3.62358693737335\\
23	3.68530814090818\\
24	3.74474001796225\\
25	3.80205403595704\\
};
\addlegendentry{Cut-Set Bound};

\addplot [color=red,dashed,mark=*,mark options={solid}]
  table[row sep=crcr]{%
0	0\\
1	0.296244237584045\\
2	0.502995924393798\\
3	0.663038275084027\\
4	0.794017723468002\\
5	0.905067180567608\\
6	1.00156302766493\\
7	1.08694957456997\\
8	1.16357064216989\\
9	1.23309536293569\\
10	1.29675546177256\\
11	1.35548632411696\\
12	1.41001527816066\\
13	1.46091919915214\\
14	1.5086634242773\\
15	1.55362882507643\\
16	1.59613112140144\\
17	1.6364349645811\\
18	1.67476440485653\\
19	1.71131080410275\\
20	1.74623890812239\\
21	1.77969156992412\\
22	1.81179346868667\\
23	1.84265407045409\\
24	1.87237000898113\\
25	1.90102701797852\\
};
\addlegendentry{TDMA};

\addplot [color=blue,dashed,mark=*,mark options={solid}]
  table[row sep=crcr]{%
0	0\\
1	0.327191348122052\\
2	0.605908199425258\\
3	0.779243907460405\\
4	1.14948214184662\\
5	1.42561459689886\\
6	1.65616168946185\\
7	1.85082025118313\\
8	2.04814053615669\\
9	2.20547400770713\\
10	2.32258775747421\\
11	2.4486124124349\\
12	2.56901622311045\\
13	2.68027183630003\\
14	2.79628428788117\\
15	2.87773650661882\\
16	2.96676827421166\\
17	3.05256589148766\\
18	3.15380090769103\\
19	3.22881223316789\\
20	3.29503424437264\\
21	3.36182981827173\\
22	3.43308537638025\\
23	3.49527106210598\\
24	3.54587501304256\\
25	3.6135215681366\\
};
\addlegendentry{ZF-EPA};

\end{axis}
\end{tikzpicture}%
\caption{ }
\end{subfigure}

\begin{subfigure}[htb]{1\linewidth}
\begin{tikzpicture}

\begin{axis}[%
width=2.411in,
height=0.8*1.2in,
at={(0.809in,0.513in)},
scale only axis,
separate axis lines,
every outer x axis line/.append style={black},
every x tick label/.append style={font=\color{black}},
xmin=0,
xmax=25,
xlabel={$P$},
every outer y axis line/.append style={black},
every y tick label/.append style={font=\color{black}},
ymin=0,
ymax=1,
ylabel={$\text{Gap [bits]}$},
axis background/.style={fill=white},
legend style={at={(0.03,0.97)},anchor=north west,legend cell align=left,align=left,draw=black}
]

\addplot [color=black,solid,mark=*,mark options={solid}]
  table[row sep=crcr]{%
0	0\\
1	0.2653\\
2	0.4001\\
3	0.5468\\
4	0.4386\\
5	0.3845\\
6	0.3470\\
7	0.3231\\
8	0.2790\\
9	0.2607\\
10	0.2709\\
11	0.2624\\
12	0.2510\\
13	0.2416\\
14	0.2210\\
15	0.2295\\
16	0.2255\\
17	0.2203\\
18	0.1957\\
19	0.1938\\
20	0.1974\\
21	0.1976\\
22	0.1905\\
23	0.1900\\
24	0.1989\\
25	0.1885\\
};
\addlegendentry{Capacity Gap ZF-EPA};
\end{axis}
\end{tikzpicture}%
\caption{ }
\end{subfigure}
\caption{Performance analysis of the ZF-EPA and TDMA schemes with respect to the cut-set bound.}
\label{fig:Gap}
\end{figure}



%


\ifCLASSOPTIONcaptionsoff
  \newpage
\fi



%






\end{document}